\newtheorem{theorem}{Theorem}
\newcommand{\GHZket}{\ket{\mathrm{GHZ}}}
\DeclareMathOperator*{\argmin}{arg\,min}
\newcommand{\NL}{\mathrm{NL}}
\newcommand{\ORt}{\mathrm{OR}_3}
\newcommand{\ORtp}{\mathrm{OR}_3^{\oplus}}
\newcommand{\nand}{\mathrm{NAND}_2}
\newcommand{\ps}{\bar{p}_{S}}
\newcommand{\psc}{\bar{p}_{S}^{c}}
\newcommand{\thickhline}{%
    \noalign {\ifnum 0=`}\fi \hrule height 1pt
    \futurelet \reserved@a \@xhline
}
\newcolumntype{"}{@{\hskip\tabcolsep\vrule width 1pt\hskip\tabcolsep}}
\begin{document}

\preprint{APS/123-QED}

\title{Non-adaptive measurement-based quantum computation on IBM Q}

\author{Jelena Mackeprang}
\altaffiliation{Present address: Centrum Wiskunde \& Informatica (CWI), Science Park 123, 1098 XG Amsterdam, The Netherlands}
\affiliation{Institute for Functional Matter and Quantum Technologies, University of Stuttgart, 70569 Stuttgart, Germany}
\affiliation{Center for Integrated Quantum Science and Technology (IQST), University of Stuttgart, 70569 Stuttgart, Germany}
\author{Daniel Bhatti}
\affiliation{Institute for Functional Matter and Quantum Technologies, University of Stuttgart, 70569 Stuttgart, Germany}
\affiliation{Center for Integrated Quantum Science and Technology (IQST), University of Stuttgart, 70569 Stuttgart, Germany}
\author{Stefanie Barz}
\affiliation{Institute for Functional Matter and Quantum Technologies, University of Stuttgart, 70569 Stuttgart, Germany}
\affiliation{Center for Integrated Quantum Science and Technology (IQST), University of Stuttgart, 70569 Stuttgart, Germany}

%---------------------------------------------------------------------------------------------------------------------------------------------------------------------------------------\begin{abstract}
\begin{abstract}
We test the quantumness of IBM’s quantum computer \textit{IBM Quantum System One} in Ehningen, Germany.
We generate generalised n-qubit GHZ states and measure Bell inequalities to investigate the n-party entanglement of the GHZ states.
The implemented Bell inequalities are derived from non-adaptive measurement-based quantum computation (NMQC), a type of quantum computing that links the successful computation of a non-linear function to the violation of a multipartite Bell-inequality.
The goal is to compute a multivariate Boolean function that clearly differentiates non-local correlations from local hidden variables (LHVs).
Since it has been shown that LHVs can only compute linear functions, whereas quantum correlations are capable of outputting every possible Boolean function it thus serves as an indicator of multipartite entanglement.
Here, we compute various non-linear functions with NMQC on IBM’s quantum computer IBM Quantum System One and thereby demonstrate that the presented method can be used to characterize quantum devices.
We find a violation for a maximum of seven qubits and compare our results to an existing implementation of NMQC using photons.

\end{abstract}
%---------------------------------------------------------------------------------------------------------------------------------------------------------------------------------------

\maketitle

%-----------------------------------------------------------------------
% At the end, change etal references back to normal reference
%Qiskitetal
%Arute2019etal
%-----------------------------------------------------------------------

%-----------------------------------------------------------------------
%Introduction
%-----------------------------------------------------------------------
\section{\label{sec:intor} Introduction}

Commercially available quantum computers (QCs) have arrived in the NISQ (\textit{noisy intermediate-scale quantum}) era~\cite{Preskill2018}. 
Equipped with $10$s to $100$s of of noisy qubits, these devices already enable the implementation of quantum operations and thus basic quantum algorithms~\cite{Bharti2022}. Despite the lack of error correction, algorithms and techniques adapted to the strengths and shortcomings of the computers could facilitate non-classical computation within the near future.
To compare the performance of the large range of different quantum devices and to find the best-suited QC for a specific problem, benchmarking, i.e., reproducibly measuring the performance of quantum devices, becomes especially important~\cite{Eisert2020}.

To be independent of the architecture and capture the complexity of quantum machines, benchmarking protocols go beyond comparing the various hardware characteristics~\cite{Moll2018,Lubinski2021}.
The goal is to find protocols that give maximal information about the performance of a quantum device~\cite{Bharti2022}.
Examples for such hardware benchmarks are randomised benchmarking~\cite{Helsen2022}, cross-entropy benchmarks~\cite{Arute2019etal}, or the quantum volume~\cite{Moll2018,Cross2019}. Besides that, application benchmarks exist which test the performance of NISQ devices based on their execution of different applications or algorithms and help one to understand how good QCs can deal with different tasks~\cite{Bharti2022,Lubinski2021}.

One fundamental type of application that can be used to benchmark quantum devices is the generation and verification of entanglement~\cite{Alsina2016,Swain2019,Huang2020,Gonzalez2020,Baumer2021,Yang2022,Wang2018,Mooney2021AQT,Wei2020,Mooney2021}. 
To this aim, various tests of multipartite entanglement have been implemented, e.g. utilising Mermin inequalities~\cite{Alsina2016,Swain2019,Huang2020,Gonzalez2020} or multiparty Bell inequalities~\cite{Baumer2021,Yang2022}, but also measuring the entanglement between all connected qubits in a large graph state~\cite{Wang2018,Mooney2021AQT}. In the case of Greenberger-Horne-Zeilinger (GHZ) states a feasible method to estimate the fidelity has been derived and implemented to verify the state generation of large numbers of qubits~\cite{Wei2020,Mooney2021}.

In this work, we make use of a method called non-adaptive measurement-based quantum computation (NMQC) to 
characterise an IBM QC with 27 superconducting qubits.

The goal in NMQC is to compute a multivariate function. While local hidden variables (LHVs) can only output linear functions, quantum correlations can compute all Boolean functions. The success of such a computation can be related to the violation of a (generalised) Bell inequality and proves the advantage over classical resources~\cite{Hoban2011IOP}. So far, binary NMQC has been implemented with four-photon GHZ states~\cite{Demirel2021}.
Here, we use GHZ states on an IBM QC to implement NMQC with more than four qubits. This allows us to test the quantum correlations of the generated GHZ states and therefore the non-classicality of the respective IBM quantum computer.

In particular, we implement NMQC for one two-bit function, three three-bit functions, and one four-bit, one five-bit, and one six-bit function on the superconducting quantum computing system IBM Quantum System One (QSO) and demonstrate that it exhibits multipartite entanglement. For qubit numbers lower or equal to five, we utilise quantum readout error mitigation~\cite{Maciejewski2020} to reduce noise from local measurement errors. For higher qubit numbers, we utilise the error mitigation tools provided by Qiskit~\cite{Qiskitetal}. We demonstrate violations of the associated Bell inequalities for up to seven qubits, which indicates the non-classical properties of the quantum computing system.

%---------------------------------------------------------------------------------------------------------------------------------------------------------------------------------------
%Background
%--------------------------------------------------------------------------------------------------------------------------------------------------------------------------------------- 
\section{\label{sec:BG}Background}
\subsection{\label{sec:BG:NMQC}NMQC}

First, let us briefly describe the general idea of NMQC (for a detailed overview over the procedure see Fig.~\ref{fig:NMQCscheme} and, e.g. Refs.~\cite{Hoban2011IOP,Mackeprang2022}): Starting from a classical $n$-bit input string $x = (x_{0},x_{1},\ldots,x_{n-1}) \in \{0,1\}^n$, which is sampled from a probability distribution $\xi(x)$, the goal is to compute any multivariate Boolean function $f: \{0,1\}^n \rightarrow \{0,1\}$. For this, one has access to a restricted classical computer limited to addition mod~2, which can be used for classical pre- and post-processing (see Fig.~\ref{fig:NMQCscheme}).
The core of NMQC is embodied by non-adaptive measurements on an $l$-qubit resource state, with $l\leq n$.
It has been shown that if the measurement statistics are described by local hidden variables (LHVs)~\cite{Bell1964}, i.e. one uses a classical resource state, the output of NMQC is restricted to linear functions. As the pre-processor is already capable of outputting linear functions LHVs thus do not ``boost'' the pre-processor's computational power~\cite{Hoban2011IOP}.

In contrast, non-local quantum correlations, can elevate the pre-processor to classical universality. The generalised $l$-qubit GHZ state
\begin{equation} \label{eq:2GHZ}
	\GHZket= \frac{1}{\sqrt{2}}\left(\ket{0}^{\otimes l}+\ket{1}^{\otimes l}\right),	\end{equation}
enables the computation of  \emph{all} functions $f: \{0,1\}^n \rightarrow \{0,1\}$ with at most $l=2^n-1$ qubits. The computation of a non-linear function 
requires \emph{non-locality}~\cite{Brunner2014} and can be seen as a type of GHZ paradox~\cite{Hoban2011IOP}.
Thus, the successful execution of NMQC demonstrates non-locality. Note that in our case the non-locality is realised by single-qubit measurements on an multipartite entangled state and we can use the two terms non-locality and multipartite entanglement interchangeably.
%we say that NMQC serves as an indicator of multipartite entanglement instead of non-locality. Technically, it demonstrates the latter (for a definition see~\cite{}), but in this case the non-locality is realised by single-qubit measurement on an entangled state, hence we use these terms interchangeably.
%This means that the violation of a Bell inequality derived from NMQC actually indicates non-locality. However, in this paper we speak of multipartite entanglement, since it is the more familiar term with regard to quantum computing.

In general, it can be shown that the average success probability $\ps = p(z=f(x))$, i.e. the probability that the output $z$ is identical to the value of the target function $f(x)$, is related to a normalised Bell inequality $\beta$ with a classical (LHV) bound $\beta_c$ and a quantum bound $\beta_q$ \cite{Hoban2011IOP}:
\begin{align} 
	2\ps -1 = \beta &= \sum_{x} (-1)^{f(x)}\xi(x) E(x) \leq \begin{cases}  \beta_c \\ \beta_q
	\end{cases} . \label{eq:psu}
\end{align}
The expectation values are defined:
\begin{equation}\label{eq:E2def}
	E(x) = p(z=0|x)-p(z=1|x),
\end{equation}
where $p(z=k|x)$ is the probability that $z$ is equal to $k$ for the input $x$.

It has been shown that the GHZ state always maximally violates the given Bell inequalities and minimises the number of required qubits for a violation~\cite{Werner2001,Zukowski2002}. It is thus optimal for NMQC~\cite{Hoban2011IOP} and we will use it as a resource in the following investigations on IBM QSO.
 
	\begin{figure}[b]
		\centering
		 \includegraphics[width=.8\linewidth]{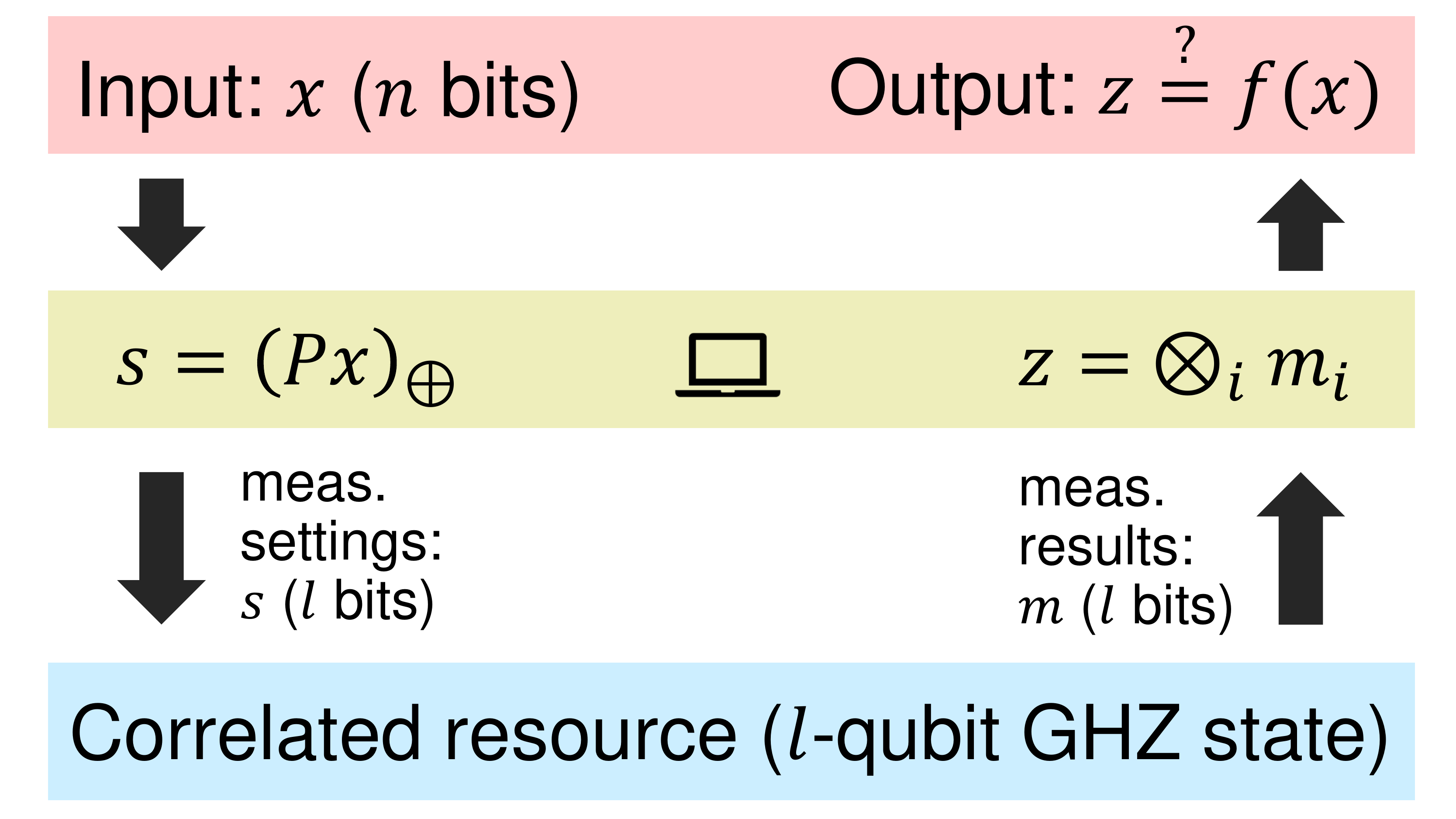}
			\caption{The figure shows the general scheme of NMQC. At the beginning, an input string $x \in \{0,1\}^n$ is sent to the parity computer, which in turn computes the bit string $s \in \{0,1\}^l$. This restricted computation can be seen as a matrix vector multiplication: $s=(Px)_{\oplus}$, where $P$ is an $l$-by-$n$ binary matrix and $\oplus$ denotes that the matrix vector product is evaluated w.r.t. mod~2 operations. Each bit $s_{i}=0,1$ in $s$ now determines the settings for the measurement on the $i$th qubit of the $l$-qubit resource state. For each subsystem, there are two measurement operators $\hat{m}_{i}(s_{i})$, one can choose from (here $\hat{m}_{i}(0)\equiv X$ and $\hat{m}_{i}(1)\equiv Y$, with $X$ and $Y$ denoting the Pauli operators). Each measurement yields one of two possible measurement results $M_i \in \{-1,1\}$, which can be mapped to bits $m_i \in \{0,1\}$: $ M_i = (-1)^{m_i} $. The measurements are performed on a correlated $l$-qubit resource and the measurement results $m_{i} \in \{0,1\}$ are summed up by the parity computer: $z \equiv \bigoplus_i m_i$. Finally, if $z=f(x)$ for this input $x$, the computation was successful. Note that if $z = f(x)$ for every $x$, we say that an NMQC scheme is \textbf{deterministic}. The figure has been adapted from~\cite{Mackeprang2022}.}
			\label{fig:NMQCscheme}
	\end{figure}

\subsection{Tested Functions and Bell inequalities}\label{sec:BellInequ}

The NMQC computations for the four-qubit GHZ state presented in this work result from the two-variate function:
\begin{equation}
	\mathrm{NAND}_2(x)= x_0x_1 \oplus 1 \label{eq:NMQCIBM:funcs4},
\end{equation}
and the three three-variate functions:
\begin{align}
	h_3(x)&=   x_0x_1\oplus x_0x_2 \oplus x_1x_2 \oplus x_0 \oplus x_1 \oplus x_2 , \label{eq:NMQCIBM:funcs1} \\
	\mathrm{OR}_3(x)  &= x_0 \lor x_1 \lor x_2 , \label{eq:NMQCIBM:funcs2} \\	
	\mathrm{OR}_3^{\oplus}(x)&=\mathrm{OR}_3(x)  \oplus x_0x_2 , \label{eq:NMQCIBM:funcs3}
\end{align}
where $\lor$
is the logical OR operator and $\oplus$ denotes addition mod~2. Note that the same functions have been used to implement NMQC using a four-photon GHZ state in Ref.~\cite{Demirel2021}.

In the case of the two-bit function $\mathrm{NAND}_2(x)$, we use a uniform probability distribution $\xi(x)= \frac{1}{4}$,
which yields the Bell inequality [see Eq.~(\ref{eq:psu})]:
\begin{align} \label{eq:NMQCIBM:nand2Bell}
	\beta_{\mathrm{NAND}_2} &= \frac{1}{4} \left[ - E((0,0)) - E((1,0)) \right. \nonumber \\
	&\phantom{{}={}} \left. - E((0,1))  +  E((1,1)) \right]  \leq \left\{\begin{array}{l}
		\beta_{c} = 1/2\\
		\beta_{q} =1 
	\end{array} . \right.
\end{align}

The relation between the measurement settings and the measurements, i.e., $\hat{m}_{i}(s_{i}) = X/Y$ for $s_{i}=0/1$ ($i\in\{0,1,2,3\}$)
allows us to rewrite the Bell inequality~(\ref{eq:NMQCIBM:nand2Bell}) in terms of the four measurements:
\begin{align} \label{eq:NMQCIBM:nand2Bell2}
	\beta_{\mathrm{NAND}_2} &= \frac{1}{4} \left< -XXYY - YXXY \right. \nonumber \\
	&\phantom{{}={}} \left. - XYXY + YYYY \right>  \leq \left\{\begin{array}{l}
		\beta_{c} = 1/2\\
		\beta_{q} =1 
	\end{array} \right.,
\end{align}
where we additionally made use of the following relation between measurement settings and the input bits $x_i$:
\begin{equation}
	s_0 = x_0, \quad s_1 = x_1, \quad s_2 = x_0 \oplus x_1 \oplus 1 , \quad s_3=1.
\end{equation}

In the same manner one finds the Bell inequalities for the three three-variate functions given in Eqns.~(\ref{eq:NMQCIBM:funcs1}-\ref{eq:NMQCIBM:funcs3})~\cite{Demirel2021}. The inequalities and the respective pre-processing implemented in Sec.~\ref{sec:results} are shown in Table~\ref{tab:Inequalities}.

To perform NMQC for five- to seven-qubit GHZ states we use the generalisation of $h_3(x)$, namely $h_k(x)$, for $k=4,\,5$ and 6:
\begin{equation}\label{eq:hk}
	h_k(x) = \bigoplus_{i=0}^{k-2}x_i \left(\bigoplus_{j=i+1}^{k-1} x_j\right)  \oplus \bigoplus_{i=0}^{k-1} x_i.
\end{equation}

\begin{table*}[t]
	\centering
	\begin{tabular}{lcccccccc"cc} 
		$x$ & $(0,0,0)$ & $(1,0,0)$ & $(0,1,0)$ & $(0,0,1)$ & $(1,1,0)$ & $(1,0,1)$ & $(0,1,1)$ & $(1,1,1)$  \\		
		\hline
		$\left< \hat{m}_{0}(s_{0})\hat{m}_{1}(s_{1})\hat{m}_{2}(s_{2})\hat{m}_{3}(s_{3}) \right>$ & $\left< XXXX \right>$ & $\left< YXXY \right>$ & $\left< XYXY\right>$ & $\left< XXYY \right>$ & $\left< YYXX \right>$ & $\left< YXYX \right>$ & $\left<XYYX \right>$ & $\left< YYYY \right>$ & $\beta_{c}$ & $\beta_{q}$ \\		
		\thickhline
		$(-1)^{\mathrm{OR}_3(x)} \xi(x)$ & $+3/10$ & $-1/10$ & $-1/10$ & $-1/10$ & $-1/10$ & $-1/10$ & $-1/10$ & $-1/10$ & $4/10$ & $8/10$ \\
		\hline 
		$(-1)^{\mathrm{OR}_3^{\oplus}(x)} \xi(x)$ & $+1/16$ & $-3/16$ & $-3/16$ & $-1/16$ & $-3/16$ & $+1/16$ & $-3/16$ & $+1/16$ & $9/16$ & $14/16$ \\
		\hline
		$(-1)^{h_3(x)} \xi(x)$ & $+1/8$ & $-1/8$ & $-1/8$ & $-1/8$ & $-1/8$ & $-1/8$ & $-1/8$ & $+1/8$ & $1/2$ & $1$
	\end{tabular}
	\caption{Bell inequalities for the three three-variate functions $\mathrm{OR}_3(x)$, $\ORtp(x)$ and $h_3(x)$ given in Eqns.~(\ref{eq:NMQCIBM:funcs1}-\ref{eq:NMQCIBM:funcs3}) and implemented in Sec.~\ref{sec:results}. The first row shows all possible three-bit inputs, while the second row gives the respective measurements after pre-processing. The pre-processing used for all three functions is: $s_0 = x_0$, $s_1 = x_1,$ $s_2 = x_2,$ $s_3 = x_0 \oplus x_1 \oplus x_2$. $\hat{m}_{i}(s_{i}) = X/Y$ for $s_{i}=0/1$ ($i\in\{0,1,2,3\}$). Rows 3-5 show the resulting prefactors of the different measurement results in the Bell inequalities. The classical (quantum) bound $\beta{c}$ ($\beta{q}$) of each Bell inequality are presented in the two separate columns on the right.}\label{tab:Inequalities}
\end{table*}

For any $k$ the sampling distribution is uniform, i.e. $\xi(x)=1/2^k$, and the pre-processing is given by:
 \begin{equation}\label{eq:preprohk}
	s_i	= \begin{cases}
		x_i & i=0,\ldots,k-1\\
		\bigoplus_{j=0}^{k-1} x_j & i=k
	\end{cases} .
\end{equation}

The Bell inequality induced by $h_k(x)$ and defined by the pre-processing~\eqref{eq:preprohk} and the uniform sampling distribution has the quantum bound $q=1$. 
This can be seen by explicitly computing all expectation values.
The classical bounds can either be found numerically or inferred from the connection between NMQC and classical Reed-Muller error-correcting codes, as pointed out in~\cite{Raussendorf_2013}. They are equal to $c=2^{\frac{-k}{2}}$ for even $k$ and $2^{-\left(\frac{k-1}{2}\right)}$ for odd $k$. We elaborate on this in Appendix~\ref{app:proofs} and further show that in order to compute the $k$-bit function $h_k(x)$ with NMQC, one only requires $k+1$ qubits.

\section{NMQC on IBM Quantum System One}

IBM QSO in Ehningen, Germany, is a 27-qubit QC, which we used to run NMQC for up to seven qubits.
Testing a possible violation of Bell inequalities for different qubit configurations of the QC allows for a characterisation of the whole QC or a subset of qubits.

The QC's architecture is shown in Fig.~\ref{fig:ehningensubgraphs}, where each qubit (vertex) is marked by its \emph{physical qubit number} and edges denote physical connections between qubits. Here, physical connection means that two qubits are directly coupled, which allows for a direct implementation of two-qubit gates between those qubits. In the following, when mentioning the physical qubit numbers, we refer to the numbering depicted in Fig.~\ref{fig:ehningensubgraphs}. At the time of the experiment, the quantum computer contained a Falcon r5.11 processor and its backend version was 3.1.9.

\begin{figure}[b]
	\centering
	\includegraphics[width=.9\linewidth]{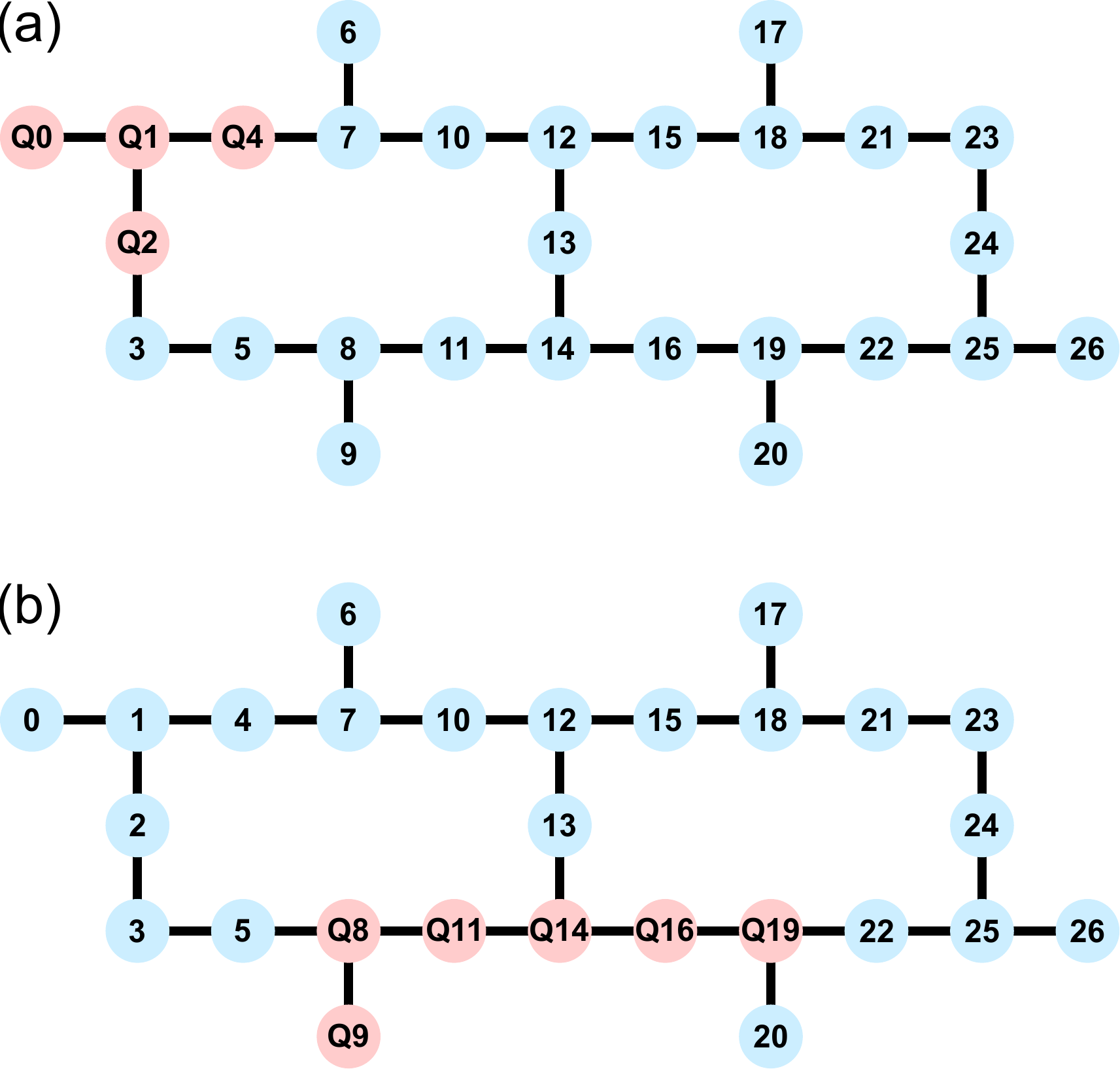}
	%\captionsetup{width=\linewidth}
	\caption{Architecture of IBM Quantum System One for different examples of $l$-qubit configurations. The nodes indicate qubits, marked by the \emph{physical} qubit numbers and the edges denote which ones are physically connected. The qubit configurations are marked in red and the respective qubits are labelled by $Q_k$, where $k$ is the physical qubit number. (a) 4-qubit configuration 0-1-2-4. (b) 6-qubit configuration 8-9-11-14-16-19.}
	\label{fig:ehningensubgraphs}
\end{figure}

We perform two experiments on IBM QSO: (i) In Sec.~\ref{sec:resultsA}, the physical qubits are chosen by Qiskit and the quantum circuit is optimised by Qiskit, and (ii) in Sec.~\ref{sec:resultsC} the physical qubits are chosen manually and the quantum circuit is optimised by our own method (see Sec.~\ref{sec:CreationGHZ}). In both experiments the goal is to generate generalised GHZ states [see Eqn.~\eqref{eq:2GHZ}] as a resource to perform NMQC. 

While in (i) we only test a single configuration, i.e. the one chosen by Qiskit, in (ii) we generate and test every possible $l$-qubit configuration, where $l$ is the number of qubits. By ``qubit configuration'', we mean the collection of $l$ physical qubits that are physically connected in the quantum computer (see Fig.~\ref{fig:ehningensubgraphs}). For each tested Bell inequality in (ii) we then average over all measured bounds for the distinct distributions to determine the measured bound of the whole QC.

\subsection{Creation of the GHZ state}
\label{sec:CreationGHZ}

The scheme used to generate the multi-qubit GHZ states in the first experiment (see Sec.~\ref{sec:resultsA}) follows an easily scalable manner~\cite{Mooney2021}, consisting of a single Hadamard ($H$) gate and $n-1$ CNOT gates (see Fig.~\ref{fig:GHZScheme}). Then, Qiskit chooses the mapping of the virtual to the physical qubits and optimises the quantum circuit according to its highest optimisation level. 

In the second experiment (see Sec.~\ref{sec:resultsC}), where we average over all possible qubit configurations, the qubit onto which the Hadamard gate acts is the one with the largest numbers of neighbours in the configuration and the one with the smallest readout-error rate. The readout-error rates are obtained form the backend's calibration data which is updated before every NMQC run. If the calibration data changed during an NMQC run, the measured data was discarded and the run repeated. The CNOT gates are arranged in such a way that as many as possible of them can be carried out simultaneously, which minimises the circuit depth~\cite{Mooney2021}. Note that CNOT gates are only applied between physically connected qubits.

\begin{figure}[b]
	\centering
	\includegraphics[width= .7\linewidth]{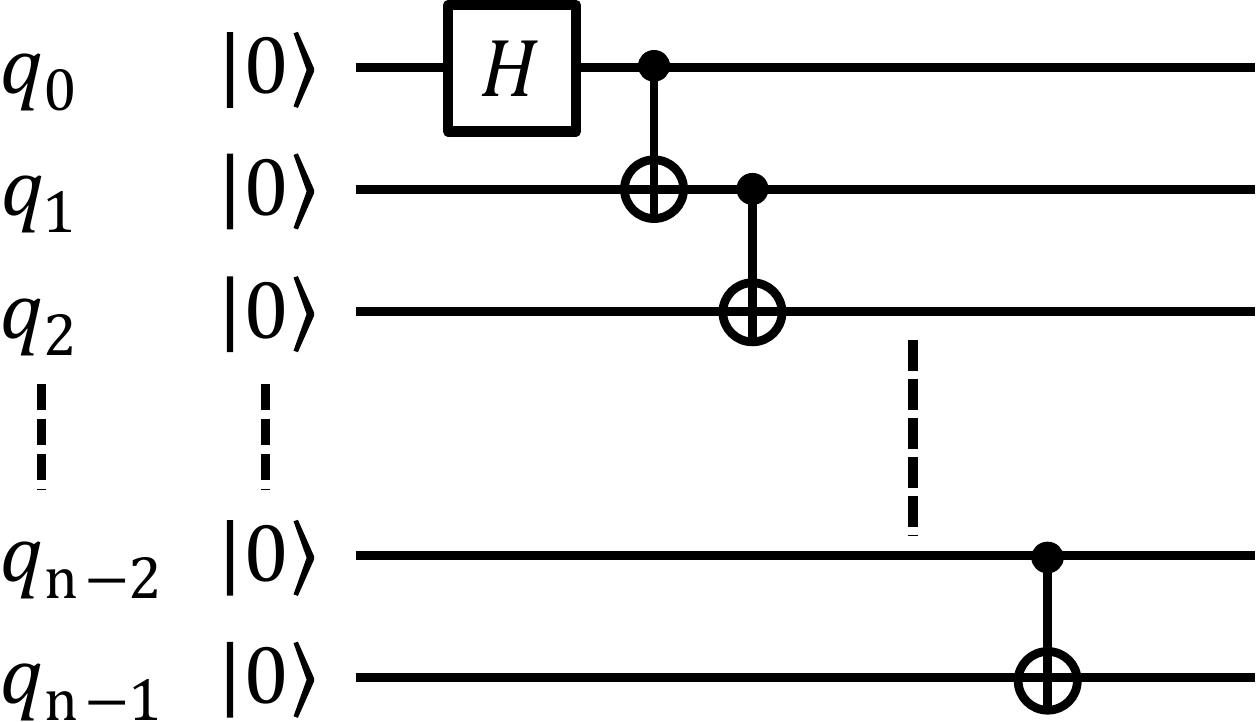}
	\caption{Theoretical scheme for the creation of a GHZ state. First a Hadamard ($H$) is applied on the virtual qubit $q_0$. Then $n-1$ CNOT gates are carried out between the qubits $q_i$ (control) and $q_{i+1}$ (target), where $i$ runs from $i=0$ to $i=n-1$.}
	\label{fig:GHZScheme}
\end{figure}

\subsection{Error mitigation}
\label{sec:ErrorMitigation}

We post-process the measured data for up to five qubits, using the quantum readout error mitigation (QREM)~\cite{Maciejewski2020}. This method has been used, for example, in~\cite{Mooney2021}, where it has led to considerable improvements in the fidelity of a generated multi-qubit GHZ state. It aims at mitigating readout errors, which are errors during the measurement of the state of a single qubit
and the main assumption is that these measurement errors are \emph{local}. We explain the details in Appendix~\ref{sec:QREM}.

To improve the results for NMQC
using six and seven qubits, QREM seems to be insufficient. In fact, we observed a negative effect on the measured bounds and thus switch to the measurement error mitigation (MEM) provided by Qiskit~\cite{Qiskitetal}.
In contrast to QREM Qiskit's MEM does not assume measurement errors to be local but global. This means that instead of $n$ $2\times 2$ calibration matrices $A_i$ one needs to determine a single $2^{n}\times 2^{n}$ calibration matrix $A$ by preparing and measuring all $2^{n}$ basis states.

\begin{figure*}[t]
	\centering
	\includegraphics[width= 0.99 \linewidth]{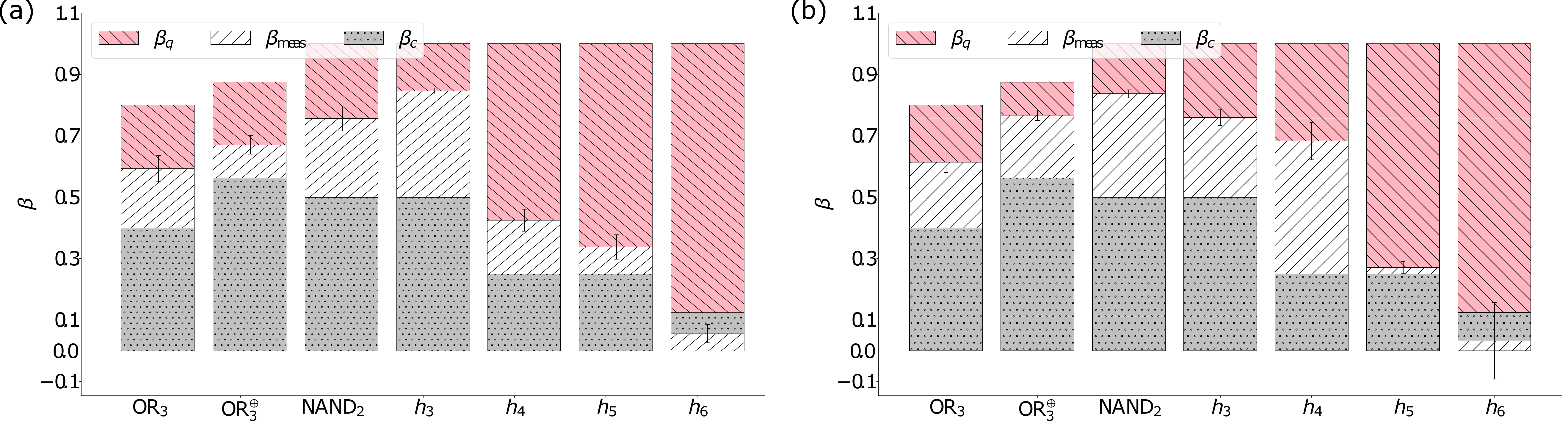}
	\caption{Measured bounds of the Bell inequalities, averaged over $70$ runs with $1000$ shots per circuit, induced by the functions $\ORt(x)$, $\ORtp(x)$, $\nand(x)$ and $h_k(x)$ for $3 \leq k \leq 6$ and their standard deviations for (a) optimisation level 3 and the ``dense'' layout method and (b) optimisation level 3 and the ``noise adaptive'' layout method. The red (diagonally striped \textbackslash \textbackslash)/grey (dotted) bars denote the theoretically achievable quantum/classical bounds and the white (diagonally striped //) bars stand for the measured values. The exact values for all measured bounds are listed in Table~\ref{tab:Data}.}
	\label{fig:3denseplusnoise}
\end{figure*}
 
\section{Results}
\label{sec:results}
 
Here, we present the average values for the violations of the associated Bell inequalities of NMQC listed above. We start with the first experiment, in which the physical qubits are chosen by Qiskit (Sec.~\ref{sec:resultsA}). There, the circuits for NMQC were all transpiled using the option ``optimisation level 3", i.e. heavy optimisation including noise-adaptive qubit mapping and gate cancellation~\cite{Qiskitetal}. We differentiate between two sub-experiments: one, where the circuits were transpiled using the option ``layout\_method=dense'', which chooses the most connected subset of qubits with the lowest noise
and one, where the circuits were transpiled using the option ``layout\_method=noise\_adaptive'', which tries to map the virtual to physical qubits in a manner that reduces the noise~\cite{Qiskitetal}.

In the second experiment we choose the qubits manually (Sec.~\ref{sec:resultsC}), testing all possible qubit configurations to generate the $n$-qubit GHZ states and perform NMQC.

\subsection{Transpilation optimisation level 3}
\label{sec:resultsA}

Fig.~\ref{fig:3denseplusnoise} shows the measured bounds of the Bell inequalities 
for optimisation level 3 and two different layout methods.
Each Bell inequality was tested in $70$ separate runs, where in each run every circuit, induced by the respective function, has been executed $1000$ times, i.e. $1000$ runs.
One can see that for both methods all measured values, except for $h_6(x)$, i.e. seven qubits, are above the classical bounds which translates to a quantum advantage in the associated NMQC games, even when taking into consideration the standard deviations determined from the $70$ runs. This, in turn, means that the quantum average success probability of the probabilistic NMQC games is higher than the LHV one, indicating multipartite entanglement. 
For this experiment one can say that the performance of both methods ``dense'' and ``noise adaptive'' provided by Qiskit was similar.

\subsection{Transpilation optimisation level 0 and error mitigation}
\label{sec:resultsC}

Fig.~\ref{fig:0wemit} (a) shows the measured bounds of the Bell inequalities averaged over all possible qubit configurations for four ($\ORt(x)$, $\ORtp(x)$, $\nand(x)$ and $h_3(x)$), five ($h_4(x)$), six ($h_5(x)$), and seven ($h_6(x)$) qubits and the mitigated bounds improved by error correction.
The error correction techniques applied are QREM (four and five qubits) and Qiskit's integrated MEM (six and seven qubits) (see Sec.~\ref{sec:ErrorMitigation}).
For every qubit configuration there is exactly one NMQC run with $1000$ shots per circuit~\footnote{If the calibration data of the backend had changed during the NMQC run, the data was discarded and the run repeated. Before every run, the data needed for the error mitigation was generated.}.
In Fig.~\ref{fig:0wemit} (b) we show the measured bounds of the qubit configuration, which produced the highest violation (exact values and the physical qubit numbers are shown in Table~\ref{tab:Data}).

 \begin{figure*}[t]
 	\centering
 	\includegraphics[width= 0.99 \linewidth]{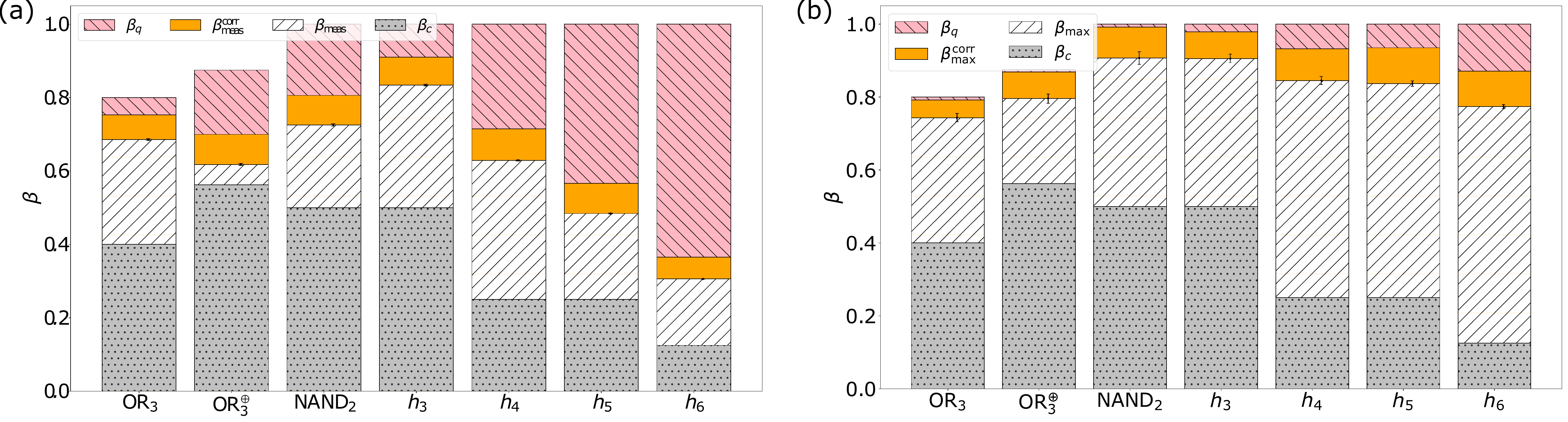}
 	\caption{(a) Measured average bounds of the Bell inequalities induced by the functions $\ORt(x)$, $\ORtp(x)$, $\nand(x)$ and $h_k(x)$ for $3 \leq k \leq 6$ and their standard deviations for optimisation level 0 and the mitigated bounds. The red (diagonally striped \textbackslash \textbackslash)/grey (dotted) bars denote the theoretically achievable quantum/classical bounds and the white (diagonally striped //) bars stand for the measured values. The orange (plain) bars denote mitigated bounds.
	(b) Measured and mitigated bounds of the qubit configuration which produced the highest values, induced by the same functions for optimisation level 0. The exact values for all measured bounds as well as the qubit configurations are listed in Table~\ref{tab:Data}.}
 	\label{fig:0wemit}
 \end{figure*}

One can see from the plots and the data [see Fig.~\ref{fig:0wemit} (a) and (b) and Table~\ref{tab:Data}] that not only the configurations, which produced the highest values, but also the averaged results are significantly higher than the classical bounds of the respective Bell inequalities for any tested function and number of qubits. Especially for more than three qubits the averaged values are higher than in the case of letting Qiskit choose the qubit configuration (see Sec.~\ref{sec:resultsA}). In the case of the single configurations one should keep in mind that these results only express a single run (see Appendix~\ref{sec:IndividualMeasuredBounds}).

The presented error margins correspond to the $99\%$ confidence intervals of the measured values with respect to $1000$ bootstrapped samples for each function except for $h_6(x)$, where we used $100$ bootstrapped samples. We chose bootstrapping~\cite{Efron1994} instead of sampling at different moments in time as the performance of the quantum processor varied considerably. Due to the heavy bias introduced by the optimization procedure used in error mitigation, we omitted error margins as statistical errors are not a meaningful measure in this situation.

\begin{table*}[t]
	\centering
	\begin{tabular}{c|c|c |c |c|c|c|c|c|c|c} 
		& $\ORt(x)$ & $\ORtp(x)$ & $\nand(x)$ & $h_3(x)$ & $h_4(x)$ & $h_5(x)$ & $h_6(x)$ & \makecell{opt.\\ level} & \makecell{layout} & Sec. \\		
		\hline\hline
		$\beta_{\mathrm{meas}}$ &$0.593\pm 0.042$ & $0.670 \pm 0.030$ & $0.757 \pm 0.040$ & $0.846 \pm 0.010$ &$0.426 \pm 0.036$ &$0.339 \pm 0.040$ & $0.056 \pm 0.030$ & 3 & dense & \ref{sec:resultsA} \\
		\hline\hline
		$\beta_{\mathrm{meas}}$ &$0.614\pm 0.034$ & $0.767 \pm 0.017$ & $0.837 \pm 0.013$ & $0.760 \pm 0.026$ &$0.683 \pm 0.026$ &$0.271 \pm 0.061$ & $0.032 \pm 0.124$ & 3 & noise & \ref{sec:resultsA} \\
		\hline\hline  
		$\beta_{\mathrm{meas}}$ &	$0.686 \pm 0.002$ & $0.618 \pm 0.003$  & $0.725 \pm 0.004$ & $0.834 \pm 0.003$ &$0.629 \pm 0.002$ & $0.484 \pm 0.001$& $0.306 \pm 0.001$ & 0 & all & \ref{sec:resultsC}  \\   \hline
		$\beta_{\mathrm{meas}}^{\mathrm{corr}}$ &	$0.753 $ & $0.700 $  & $0.806 $ & $0.911 $ &$0.715 $ & $0.566$ & $0.366 $ & 0 & all & \ref{sec:resultsC} \\
		\hline\hline
		$\beta_{\mathrm{max}}$ & $0.743 \pm 0.012$ & $0.796 \pm 0.013$ & $0.907 \pm 0.017$ & $0.906 \pm 0.012$ & $0.845 \pm 0.011$ & $0.837 \pm 0.008$ & $0.774 \pm 0.006$ & 0 & best & \ref{sec:resultsC} \\ \hline 
		$\beta_{\mathrm{max}}^{\mathrm{corr}}$ &0.792 & 0.869 & 0.992 & 0.979 &0.932 & 0.935 & 0.871  & 0 & best & \ref{sec:resultsC} 
	\end{tabular}
	\caption{Measured averaged values and the standard deviations for the bounds of the Bell inequalities induced by the NMQC games listed in Sec.~\ref{sec:BellInequ}. Results are shown for different Qiskit optimisation levels, Qiskit layout methods and manually chosen qubit mapping, without ($\beta_{\mathrm{meas}}$) and with ($\beta_{\mathrm{meas}}^{\mathrm{corr}}$) error mitigation. For the layout method ``best'' the exact qubit configurations are: $1-2-4-7$ ($\ORt(x)$), $0-1-2-3$ ($\ORtp(x)$), $16-19-14-22$ ($\nand(x)$), $24-18-21-23$ ($h_3(x)$), $10-12-13-14-16$ ($h_4(x)$), $4-7-10-12-13-15$ ($h_5(x)$), $11-12-13-14-16-19-20$ ($h_6(x)$).}\label{tab:Data}
\end{table*}

\subsection{Comparison to Photonic NMQC}

In Ref.~\cite{Demirel2021} binary NMQC has been implemented using four-photon GHZ states, testing the functions $\ORt(x)$, $\ORtp(x)$, $\nand(x)$ and $h_3(x)$. Here, we compare our results using IBM QSO to the photonic results.

In Fig.~\ref{fig:Photons} a) we show the measured bounds from the photonic experiments and the respective standard deviations. For a better comparison, we calculate the difference between these values and the results presented in Sec.~\ref{sec:resultsC}, with and without error mitigation, i.e. $\Delta\beta_{\text{meas/max}} = \beta_{\text{meas/max}}(\text{photons}) - \beta_{\text{meas/max}}(\text{QSO})$ and $\Delta\beta_{\text{meas/max}}^{\text{corr}} = \beta_{\text{meas/max}}^{\text{corr}}(\text{photons}) - \beta_{\text{meas/max}}^{\text{corr}}(\text{QSO})$. In Fig.~\ref{fig:Photons} b) we plot the difference to the measured bounds averaged over all qubit configurations (see Fig.~\ref{fig:0wemit} a)) and in Fig.~\ref{fig:Photons} c) we plot the difference to the measured bounds of the qubit configuration which produced the highest values (see Fig.~\ref{fig:0wemit} b)).

We find that the photonic values are higher than the uncorrected results using IBM QSO comparing to both the averaged bounds and the highest bounds (except $\Delta\beta_{\text{max}}$ for OR$_3$). Using error mitigation the averaged values come closer to the photonic results but only exceed them in the case of OR$_3$. Only when applying error mitigation to the highest values produced by a single qubit configuration the photonic results are exceeded for every function. Additionally, one has to take into account that the values measured on IBM QSO strongly vary depending on the configuration and the time of execution (see Appendix~\ref{sec:IndividualMeasuredBounds}). The possibility to go to larger numbers of qubits remains a big advantage of IBM QSO.

 \begin{figure*}[t]
 	\centering
 	\includegraphics[width= 0.99 \linewidth]{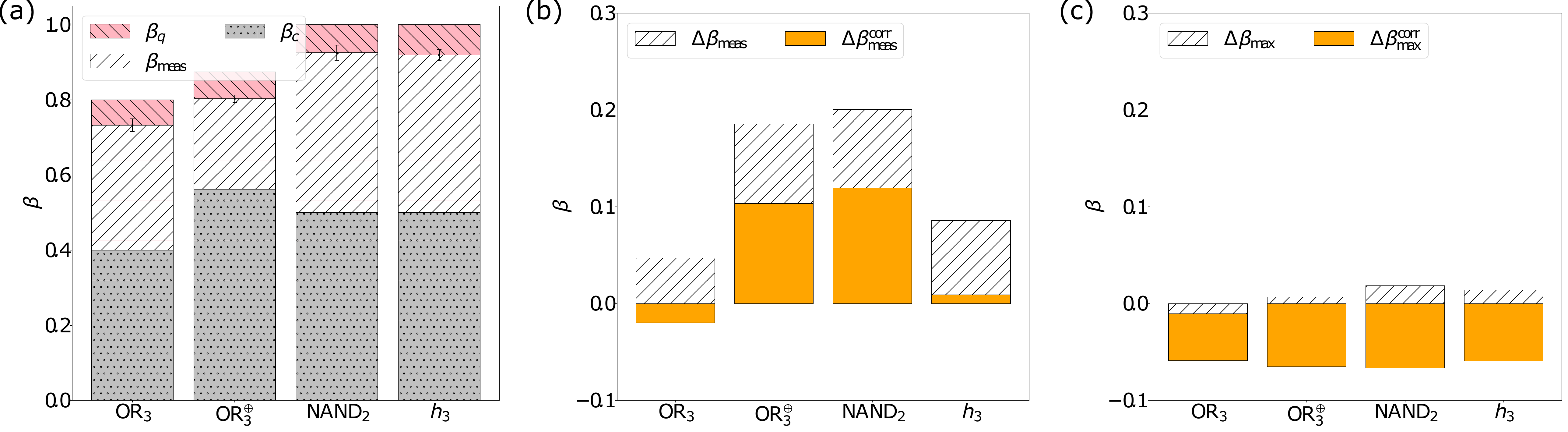}
 	\caption{(a) Measured average bounds of the Bell inequalities induced by the functions $\ORt(x)$, $\ORtp(x)$, $\nand(x)$ and $h_3(x)$ and their standard deviations for a photonic implementation of NMQC using four-photon GHZ states~\cite{Demirel2021}. The red (diagonally striped \textbackslash \textbackslash)/grey (dotted) bars denote the theoretically achievable quantum/classical bounds and the white (diagonally striped //) bars stand for the measured values.
	(b) and (c) Difference between the photonic results and the results presented in Sec.~\ref{sec:resultsC}, with and without error mitigation, i.e. $\Delta\beta_{\text{meas/max}} = \beta_{\text{meas/max}}(\text{photons}) - \beta_{\text{meas/max}}(\text{QSO})$ (white (diagonally striped //) bars) and $\Delta\beta_{\text{meas/max}}^{\text{corr}} = \beta_{\text{meas/max}}^{\text{corr}}(\text{photons}) - \beta_{\text{meas/max}}^{\text{corr}}(\text{QSO})$ (orange (plain) bars). (b) shows the difference to the averaged measured bounds. Without error mitigation the photonic values are always larger and $\Delta\beta_{\text{meas}}$ is positive. With error mitigation the differences become smaller, but only for OR$_3$ the mitigated result exceeds the photonic value. (c) shows the difference to the highest values produced by a single configuration. Even without error mitigation the differences $\Delta\beta_{\text{max}}$ are small, however only for OR$_3$ the photonic value is smaller and $\Delta\beta_{\text{max}}$ becomes negative. With error mitigation the results produced by QSO exceed the photonic values and $\Delta\beta_{\text{max}}^{\text{corr}}$ is always negative.}
 	\label{fig:Photons}
 \end{figure*}

\section{Conclusion and Outlook}\label{sec:concl}
 
On average, we have reached violations of all measured Bell inequalities for all tested functions listed in section~\ref{sec:BellInequ} on the 27-qubit IBM QSO in Ehningen, Germany. In the cases where Qiskit has chosen exactely one configuration of physical qubits for every experiment violations have been measured for all functions for up to \textit{six} qubits.
In contrast, in the cases where we have tested all possible $l$-qubit configurations, $l$ being the required number of qubits, and averaged over all results, the averaged measured bounds clearly violate the tested Bell inequalities (with and without error mitigation) for up to \textit{seven} qubits.
Since we have not only investigated a single qubit configuration, but averaged over all possible configurations, we have thereby tested the quantumness of the device. This means that we have demonstrated a computational advantage in terms of NMQC using the device IBM QSO and thus its non-local behaviour for up to seven qubits. Further we have compared our results using four qubits to an existing implementation of NMQC using four-photon GHZ states~\cite{Demirel2021}.

To improve the results and carry out NMQC for $h_k(x)$ for $k>7$ one could apply more sophisticated error mitigation/correction techniques~\cite{Mooney2021}. It would also be interesting to find other functions that translate to convenient Bell inequalities to test the non-classicality of quantum computers using this computational test. For this one could use the relation between NMQC and Reed-Muller codes hinted at in Sec.~\ref{sec:BellInequ}. However, as the performance of the qubits varies widely over time it should definitely be taken into account in order to obtain larger GHZ state fidelities and thus better results.
It is likely that in the future, more sophisticated qubit mapping methods will be developed, such as~\cite{Gerard2021}, which in combination with error mitigation and error correction methods could facilitate NMQC with large numbers of qubits.

Another possibility to reduce errors and noise in the generation of the GHZ states could be to minimise the depth of the quantum circuit. In~\cite{Piroli2021} a method has been discussed in which GHZ states of arbitrary size can be generated with constant circuit depth. Although additional ancilla qubits are needed here the advantage gained from the constant circuit depth would presumably beat possible problems caused by the increased number of qubits. From the generation of linear graph states on IBM QCs, which also has a constant depth, it is known that entangled states of much larger size can be generated~\cite{Mooney2021AQT,Yang2022}.

In conclusion, we have implemented NMQC for up to seven qubits using a 27-qubit IBM QC. We have shown that the calculation of non-linear Boolean functions and the simultaneous violation of multipartite Bell inequalities can be used to characterise quantum devices. This method can easily be extended to different quantum computing systems with qubits but also to higher-dimensional systems~\cite{Mackeprang2022}.

\begin{acknowledgments}
\begin{CJK}{UTF8}{mj}
We thank Jonas Helsen for the useful suggestions and explanations, Alexandra R. van den Berg for the fruitful discussions, Roeland Wiersema for the helpful tips and Chewon Cho (조채원) for the in-depth explanations.\end{CJK}~We thank Lukas Rückle and Christopher Thalacker for the explanations.
We acknowledge support from the Carl Zeiss Foundation, the Centre for Integrated Quantum Science and Technology (IQ$^\text{ST}$), 
%the German Research Foundation (DFG), 
the Federal Ministry of Education and Research (BMBF, projects SiSiQ and PhotonQ), 
the Federal Ministry for Economic Affairs and Climate Action (BMWK, project PlanQK), 
and the Competence Center Quantum Computing Baden-W\"urttemberg (funded by the Ministerium für Wirtschaft, Arbeit und Tourismus Baden-W\"urttemberg, project QORA).
We acknowledge the use of IBM Quantum services for this work. The views expressed are those of the authors, and do not reflect the official policy or position of IBM or the IBM Quantum team.

\end{acknowledgments}
 
%\bibliography{qutritbib}{}

%

%--------------------------------------------------------------------------------------------
% APPENDIX
%--------------------------------------------------------------------------------------------

\appendix

\begin{widetext}

\section{Quantum readout error mitigation}
\label{sec:QREM}

In this section we explain the details of the quantum readout error mitigation (QREM) introduced in Ref.~\cite{Maciejewski2020}. It aims at mitigating readout errors, which are errors during the measurement of the state of a single qubit. For example, a qubit might actually be in the state $\ket{1}$, but the measurement device asserts that it is in the state $\ket{0}$. The main assumption in QREM is that these measurement errors are \emph{local}.
This means that the measurement errors act on the probability vector $\vec{p}\equiv (p(0,0,0,...,0), p(0,0,0,...,1),...,p(1,1,1,...,1))^T$, where $p(m_{n-1},m_{n-2},...,m_1,m_0)$ is the probability of obtaining the measurement result $m_i$ for the measurement (in the computational basis) of the $i$th qubit $q_0$ ($i\in\{0,1,\ldots,n-1\}$), in the following way:
\begin{equation}
	\vec{p}\,' = \bigotimes_{i=1}^{n}{A}_{n-i} \vec{p}, \quad A_i \equiv \begin{pmatrix}
		p_i(0|0) & p_i(0|1) \\
		p_i(1|0) & p_i(1|1)
	\end{pmatrix}.
\end{equation}
The $A_i$ are called the \emph{calibration matrices} and $p_i(x|y)$ are the probabilities of measuring the state $x$ given that the $i$th qubit was actually prepared in the state $y$.

In order to mitigate the readout errors, one has to first compute the calibration matrices by preparing the qubits in the various states and then estimating the probabilities $p_i(x|y)$ using the law of large numbers. The corrected probability vector $\vec{p}$ is then obtained from the experimental probability vector $\vec{p}\,'$ by inverting the calibration matrices. However, sometimes, as the estimation of $A_i$ is not exact, the resulting $\vec{p}$ may not be an actual physical probability vector, meaning that some element of it may be smaller than 0 or all of them do not sum up to 1. Therefore, if that is the case, we use an optimisation method to find the closest physical probability vector $\vec{p}\,^{*}$ to $\vec{p}$. To be exact, $\vec{p}\,^{*}$ is given by~\cite{Maciejewski2020}:
\begin{equation}
	\vec{p}\,^{*}=\argmin_{\forall i \, \tilde{p}_i \geq 0, \sum_i \tilde{p}_i = 1}(\lVert\tilde{\vec{p}}-\vec{p}\rVert),
\end{equation}
where $\lVert\cdot\rVert$ is the euclidean norm.

\section{Additional proofs}\label{app:proofs}
 
 \subsection{Proof of efficient computability of \texorpdfstring{$h_k(x)$}{hk(x)}  for all \texorpdfstring{$k$}{k}}
 
 Here, we prove the following:
 
\begin{theorem}\label{thm:hkquantum}
	To deterministically compute the $k$-bit function $h_k(x)$, as defined in~\eqref{eq:hk}, one only requires $l=k+1$ qubits forming the $l$-qubit generalised GHZ state (see~\eqref{eq:2GHZ}), the pre-processing defined in Eqn.~\eqref{eq:preprohk} and the measurement settings~$\hat{m}_i(s_i=0)=X$, $\hat{m}_i(s_i=1)=Y$ $\forall i$.
\end{theorem}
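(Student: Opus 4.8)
The plan is to reduce the statement to a computation of single-qubit-measurement expectation values on the GHZ state and then to match them, sign by sign, with $(-1)^{h_k(x)}$. For a deterministic NMQC scheme one needs $E(x)=(-1)^{f(x)}$ for \emph{every} input $x$, and by Eq.~\eqref{eq:E2def} together with $z=\bigoplus_i m_i$ one has $E(x)=\GHZbra\,\bigotimes_{i=0}^{l-1}\hat m_i(s_i)\,\GHZket$. With the prescription $\hat m_i(0)=X$, $\hat m_i(1)=Y$ this operator is a tensor product of $X$'s and $Y$'s, so the first step is just to record the standard GHZ identity
\begin{equation}
	\GHZbra\, \sigma_0\otimes\cdots\otimes\sigma_{l-1}\,\GHZket = \cos\!\left(\tfrac{\pi}{2}\,n_Y\right),
\end{equation}
where $n_Y$ is the number of factors equal to $Y$; it vanishes unless $n_Y$ is even, in which case it equals $(-1)^{n_Y/2}$. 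This follows directly from $X\ket{0}=\ket{1}$, $Y\ket{0}=i\ket{1}$, $X\ket{1}=\ket{0}$, $Y\ket{1}=-i\ket{1}$, giving $\bigotimes_i\sigma_i\ket{0}^{\otimes l}=i^{n_Y}\ket{1}^{\otimes l}$ and $\bigotimes_i\sigma_i\ket{1}^{\otimes l}=(-i)^{n_Y}\ket{0}^{\otimes l}$, hence the overlap $\tfrac12(i^{n_Y}+(-i)^{n_Y})=\cos(\pi n_Y/2)$.

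Second, I would evaluate $n_Y$ on a given input. By the pre-processing~\eqref{eq:preprohk} the settings are $s_i=x_i$ for $i<k$ and $s_k=\bigoplus_j x_j$, so $n_Y=\sum_{i=0}^{k}s_i$ (an ordinary integer sum of bits) equals the Hamming weight $w$ of $x$ when $w$ is even, and $w+1$ when $w$ is odd. In either case $n_Y$ is even, so $E(x)$ never vanishes and
\begin{equation}
	E(x)=(-1)^{\lceil w/2\rceil}.
\end{equation}
This already establishes that the scheme is deterministic; it remains to check that the definite bit it outputs equals $h_k(x)$.

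Third — the only genuinely combinatorial step — I would rewrite $h_k(x)$ as a function of $w$ alone. The degree-$j$ elementary symmetric polynomial in $0/1$ variables reduces mod~$2$ to $\binom{w}{j}\bmod 2$, read off as the coefficient of $t^j$ in $\prod_i(1+x_i t)=(1+t)^{w}$. Therefore
\begin{equation}
	h_k(x)\equiv \binom{w}{2}+\binom{w}{1}\equiv \binom{w+1}{2}\pmod 2
\end{equation}
by Pascal's rule, and examining $w\bmod 4$ in the four cases shows $\binom{w+1}{2}\equiv\lceil w/2\rceil\pmod 2$. Hence $E(x)=(-1)^{h_k(x)}$ for all $x$. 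Since the construction uses one qubit per input bit together with the single parity qubit, the resource is exactly the $l=k+1$ qubit GHZ state, which proves the theorem. I expect the main obstacle to be purely bookkeeping: tracking that the extra setting $s_k$ always bumps $n_Y$ back to an even value (this is precisely what turns a probabilistic scheme into a deterministic one), and then lining up the two period-$4$ patterns in $w$ so that $\lceil w/2\rceil$ and $h_k(x)$ agree mod~$2$.
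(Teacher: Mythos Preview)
Your proof is correct (modulo a harmless typo: $Y\ket{1}=-i\ket{0}$, not $-i\ket{1}$; your subsequent formula $\bigotimes_i\sigma_i\ket{1}^{\otimes l}=(-i)^{n_Y}\ket{0}^{\otimes l}$ is right regardless), but it takes a genuinely different route from the paper. The paper proceeds by \emph{induction on $k$}: it first derives the recursion $h_{k+1}(x)=h_k(x)\oplus x_kS\oplus x_k$ with $S=\bigoplus_{i<k}x_i$, then invokes the general NMQC success condition $e^{i\frac{\pi}{2}(\sum_j x_j+S)}=(-1)^{h_k(x)}$ from~\cite{Hoban2011IOP}, verifies the base case $k=3$ by a table, and checks the inductive step by running through the four values of $(x_k,S)$. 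You instead give a \emph{direct} argument that exploits from the outset that $h_k$ is a symmetric function of the input bits: you compute the GHZ expectation value explicitly as $\cos(\tfrac{\pi}{2}n_Y)$, observe that the parity qubit $s_k$ always forces $n_Y$ to be even (so the scheme is automatically deterministic), and then reduce the remaining identity to the mod-$4$ coincidence $\lceil w/2\rceil\equiv\binom{w+1}{2}\pmod 2$ via Lucas/Pascal. Your approach is shorter and more transparent about \emph{why} the construction works---it isolates the role of the parity qubit and of the symmetry of $h_k$---whereas the paper's inductive argument stays closer to the general NMQC machinery of~\cite{Hoban2011IOP} and would adapt more readily to non-symmetric function families built by the same kind of recursion.
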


\begin{proof}
	
This can easily be proven via natural induction. We start by observing the following:
	\begin{align}
		h_{k+1}(x)&=\bigoplus_{i=0}^{k-1} \bigoplus_{j=i+1}^{k}x_i x_j\oplus \bigoplus_{i=0}^{k} x_i \\
		&=\bigoplus_{i=0}^{k-2} \bigoplus_{j=i+1}^{k-1}x_i x_j \oplus x_{k}\left(\bigoplus_{i=0}^{k-1} x_i\right) \oplus \bigoplus_{i=0}^{k-1} x_i \oplus x_k \\
		&=h_k(x)\oplus x_k\left(\bigoplus_{i=0}^{k-1} x_i\right) \oplus x_k, \label{eq:hkp1}
	\end{align}
meaning that we can write $h_{k+1}(x)$ as a sum of $h_k(x)$ and a term that is only dependent of $x_k$ and the sum over the first $k$ bits of $x$, i.e. $x_0$, $x_1$,$\ldots$, $x_{k-1}$. We abbreviate this sum by $S$:
\begin{equation}\label{eq:S}
	S= \bigoplus_{i=0}^{k-1} x_i .
\end{equation}

We will now start with the actual proof by induction. For this, we need the general condition for NMQC to be successful. In~\cite{Hoban2011IOP}, it is shown that any Boolean function can be computed with NMQC when using the generalised GHZ state and the general measurement operators:
\begin{equation}
	\hat{m}_i(s_i)=\cos(\phi_i s_i)X+\sin(\phi_i s_i)Y.
\end{equation}

The condition that must be fulfilled for the deterministic computation of a function $f: \{0,1\}^n \rightarrow \{0,1\}$ is~\cite{Hoban2011IOP}:
\begin{equation}\label{eq:NMQCcond}
	e^{i\sum_{j=0}^{l-1}s_j\phi_j} = (-1)^{f(x)+c},
\end{equation}
where $\phi_j \in (-\pi,\pi)$ are angles yet to be determined, $c$ is a bit that can be added in post-processing and $s_j$ is related to the input bit string $x$ by the pre-processing $s=(Px)_{\oplus}$. To prove theorem~\ref{thm:hkquantum} by natural induction, we first show that condition~\eqref{eq:NMQCcond} holds for $h_3(x)$ and then show that it is fulfilled when $k\mapsto k+1$. As $h_k(x)$ is zero for $x=0$ for all $k$, we can set $c$ in Eqn.~\eqref{eq:NMQCcond} to zero. Additionally, we know that $\phi_j=\pi/2$ $\forall j$ due to the fixed measurement settings. This, combined with the pre-processing given by Eqn.~\eqref{eq:preprohk} simplifies Eqn.~\eqref{eq:NMQCcond} for $h_k(x)$ to:
\begin{equation}\label{eq:NMQCcondproof}
	e^{i\frac{\pi}{2}\left(\sum_{j=0}^{k-1}x_j + S\right)} = (-1)^{h_k(x)},
\end{equation}
where $S$ is the abbreviation for the sum modulo 2 of all $k$ bits in $x$ [see Eqn.~\eqref{eq:S}].
\begin{enumerate}
	\item $k=3$. We list the values for $\sum_{j=0}^{k-1}x_j$, $S$, $h_3(x)$ and $e^{i\frac{\pi}{2}\left(\sum_{j=0}^{k-1}x_j + S\right)}$ for all 8 input strings in Table~\ref{tab:IH}, of which one can read of that the induction hypothesis is fulfilled.
	\item $k\mapsto k+1$. When increasing $k$ by 1, the r.h.s of Eqn.\eqref{eq:NMQCcondproof} becomes:
	\begin{equation}\label{eq:hkp1ind}
		(-1)^{h_{k+1}(x)}=(-1)^{(h_k(x)\oplus x_k S \oplus x_k)},
	\end{equation}
where we have inserted Eqn.~\eqref{eq:hkp1}. According to condition~\eqref{eq:NMQCcondproof}, for deterministic NMQC to function for $k\mapsto k+1$, the following must hold: 
\begin{align}
		(-1)^{(h_k(x)\oplus x_k S \oplus x_k)}&=(-1)^{h_k(x)}\cdot (-1)^{x_k S \oplus x_k)} \\
		&= e^{i\frac{\pi}{2}\left(\sum_{j=0}^{k-1}x_j + S\right)}  \cdot  (-1)^{x_k S \oplus x_k} \label{eq:insertIH}\\
		&=e^{i\frac{\pi}{2}\left(\sum_{j=0}^{k}x_j  + (S\oplus x_k) \right)}.
		\end{align}
	
	We have inserted the induction hypothesis~\eqref{eq:NMQCcondproof} in Eqn.~\eqref{eq:insertIH}. In summary, we must show that this:
	\begin{equation}
		 e^{i\frac{\pi}{2}\left(\sum_{j=0}^{k-1}x_j + S\right)}  \cdot  (-1)^{x_k S \oplus x_k} = e^{i\frac{\pi}{2}\left(\sum_{j=0}^{k}x_j  + (S\oplus x_k) \right)}
	\end{equation}
is true for all $x$. We can cancel out  $e^{i\frac{\pi}{2}\left(\sum_{j=0}^{k-1}x_j \right)}$ on both sides and are left with:
 
 	\begin{equation}\label{eq:indlastcond}
 	e^{i\frac{\pi}{2} S}  \cdot  (-1)^{x_k S \oplus x_k} = e^{i\frac{\pi}{2}\left(x_k+(S\oplus x_k)\right)}.
 \end{equation}
Eqn.~\eqref{eq:indlastcond} only depends on $x_k$ and $S$. The final steps of this proof thus merely consist of checking if it is correct for the four possible combinations of $x_k$ and $S$. For both $x_k=0$ and $S=0$, both sides of the equation are equal to 1. For $x_k=1$ and $S=0$, the l.h.s. becomes ($-1$) and the r.h.s becomes $e^{i\frac{\pi}{2}(1+1)}=e^{i\pi}$. For $x_k=0$ and $S=1$, the l.h.s. equals $e^{i\frac{\pi}{2}}$ and the r.h.s. turns into $e^{i\frac{\pi}{2}\left(0+1\right)}$. Lastly, for $x_k=S=1$, the l.h.s. is $i$ and the r.h.s. is equal to $e^{i\frac{\pi}{2}\left(1+0\right)}=i$. 
\end{enumerate}
This completes the proof of theorem~\ref{thm:hkquantum}.

\begin{table}[b]
	\centering
	\begin{tabular}{c|c |c |c|c|c|c|c|c} 
	$x$ & (0,0,0) & (0,0,1) & (0,1,0)& (0,1,1) & (1,0,0) & (1,0,1) & (1,1,0)& (1,1,1) \\ 	\hline
		$\sum_{j=0}^{k-1}x_j$ & 0 & 1 & 1 &2  & 1 & 2 & 2 &3 \\
		\hline 
		$S$ & 0 & 1 & 1 & 0 & 1 & 0 & 0 & 1 \\ \hline
		$h_3(x)$ & 0 &1 & 1 & 1 & 1 & 1 & 1 & 0 \\ \hline
		$e^{i\frac{\pi}{2}\left(\sum_{j=0}^{k-1}x_j + S\right)}$ & $e^{0}=1$ & $e^{i\frac{\pi}{2}\cdot 2}=-1$&$-1$ &  $e^{i\frac{\pi}{2}(2+0)}=-1$&$-1$ &$-1$&$-1$& $e^{i\frac{\pi}{2}\left(3+1\right)}=1$
		\end{tabular}
	\caption{Table to test induction hypothesis for $h_3(x)$}\label{tab:IH}
\end{table}
\end{proof}

Note that, using the appropriate measurement settings and pre-processing, the NMQC output will always be $h_k(x)$, which means that any probabilistic NMQC game using the same measurement settings and pre-processing (and the GHZ state as a computational resource) has a quantum success probability of 1, translating to a bound $q=1$ of the associated Bell inequality.

 \subsection{Classical bound of the probabilistic NMQC game \texorpdfstring{$h_k(x)$}{hk(x)} for all \texorpdfstring{$k$}{k}}

Here, we explain how to obtain the classical success probabilities of the probabilistic NMQC games induced by $h_k(x)$ with a uniform sampling distribution $\xi(x)=1/2^k$. To be precise, we prove the following theorem:
 
 \begin{theorem}\label{thm:hkclassical}
 	The LHV bound $c$ of the Bell inequality bounding the average success probability of the probabilistic NMQC game [according to Eqn.\eqref{eq:psu}] induced by the function $h_k(x)$ defined in Eqn.~\eqref{eq:hk} with a uniform sampling distribution $\xi(x)=1/2^k$ is equal to $c=2^{-\frac{k}{2}}$ for even $k$ and $c=2^{-\left(\frac{k-1}{2}\right)}$ for odd $k$.
 \end{theorem}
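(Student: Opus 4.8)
The plan is to rewrite the classical bound of the game as the largest Walsh--Hadamard coefficient of the $\pm1$-valued function $(-1)^{h_k(x)}$, and then to evaluate that quantity in closed form by exploiting the fact that $h_k$ is a symmetric Boolean function.

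First I would reduce to deterministic affine strategies. Following the structural analysis of~\cite{Hoban2011IOP}, any LHV model for NMQC produces, conditioned on a fixed value of the hidden variable, an output bit $z$ that is an affine function $g(x)=a\cdot x\oplus b$ of the input: the pre-processing $s=(Px)_\oplus$ is linear, each single-qubit outcome is an affine function of its one-bit setting $s_i$, and the parity post-processing preserves affinity. Hence $E(x)$ always lies in the convex hull of $\{(-1)^{g(x)}:g\text{ affine}\}$, and since $\beta$ in~\eqref{eq:psu} is a linear functional of $E$, for the uniform distribution $\xi(x)=2^{-k}$ the optimum over LHVs is attained at a single affine $g$:
\begin{equation}
c=\max_{a,b}\,2^{-k}\sum_x(-1)^{h_k(x)\oplus a\cdot x\oplus b}=\max_a\lvert\widehat F(a)\rvert,\qquad \widehat F(a):=2^{-k}\sum_x(-1)^{h_k(x)}(-1)^{a\cdot x}.
\end{equation}
Equivalently $c=1-2^{1-k}\,\NL(h_k)$, where $\NL(h_k)$ is the Hamming distance of the truth table of $h_k$ to the first-order Reed--Muller code $\mathrm{RM}(1,k)$; this is the link to coding theory noted in~\cite{Raussendorf_2013}.

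Next I would bring $h_k$ into symmetric form. Since $\bigoplus_{i<j}x_ix_j$ equals $\binom{|x|}{2}$ over the integers, one has $h_k(x)\equiv\binom{|x|}{2}+|x|\equiv\binom{|x|+1}{2}\pmod 2$, so $h_k$ depends only on $|x|\bmod 4$, and explicitly $(-1)^{h_k(x)}=\operatorname{Re}\big((1+i)\,i^{|x|}\big)$. Writing $i^{|x|}=\prod_j i^{x_j}$ and factorising the character sum over the qubits,
\begin{equation}
\sum_x i^{|x|}(-1)^{a\cdot x}=\prod_{j=0}^{k-1}\Big(\sum_{x_j\in\{0,1\}}i^{x_j}(-1)^{a_jx_j}\Big)=(1+i)^{k-|a|}(1-i)^{|a|},
\end{equation}
so that $2^{k}\widehat F(a)=\operatorname{Re}\big((1+i)^{k-|a|+1}(1-i)^{|a|}\big)=2^{(k+1)/2}\cos\!\big(\tfrac{\pi}{4}(k+1-2|a|)\big)$.

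Finally I would finish by a parity case split. For even $k$ the integer $k+1-2|a|$ is odd, so the cosine equals $\pm\tfrac{1}{\sqrt2}$ for every $a$; hence $\lvert\widehat F(a)\rvert=2^{-k/2}$ for all $a$ (i.e.\ $h_k$ is bent) and $c=2^{-k/2}$. For odd $k$ the integer $k+1-2|a|$ is even, so the cosine is $0$ or $\pm1$, and the value $\pm1$ is attained for some $a$ of the appropriate Hamming-weight parity; hence $\max_a\lvert\widehat F(a)\rvert=2^{(1-k)/2}=2^{-(k-1)/2}$ and $c=2^{-(k-1)/2}$. The corresponding values of $\NL(h_k)$ are exactly the maximal nonlinearity of Boolean functions (even $k$) and of symmetric Boolean functions (odd $k$), which gives an independent consistency check against~\cite{Savicky1994,Maitra2002}. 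The only step with any subtlety is the first one --- making precise that LHV correlations reduce to affine outputs, so that the Bell functional is optimised at a deterministic affine strategy --- and this is precisely the content of~\cite{Hoban2011IOP}; what remains is the short Gauss-sum computation above.
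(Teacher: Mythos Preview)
Your argument is correct. Both you and the paper begin from the same reduction: by~\cite{Hoban2011IOP} the LHV optimum is attained at an affine output, so the classical bound equals $1-2^{1-k}\,\NL(h_k)$, equivalently the largest Walsh--Hadamard coefficient of $(-1)^{h_k}$. From there the two proofs diverge. The paper simply quotes the literature: $h_k$ is one of the symmetric bent functions for even $k$, with $\NL=2^{k-1}-2^{k/2-1}$ by~\cite{Savicky1994}, and for odd $k$ its nonlinearity $2^{k-1}-2^{(k-1)/2}$ is the maximum among symmetric functions by~\cite{Maitra2002}; substituting into~\eqref{eq:psu} yields the stated bounds. You instead compute the Walsh spectrum directly, via the identity $(-1)^{h_k(x)}=\operatorname{Re}\bigl((1+i)\,i^{|x|}\bigr)$ and a factorised character sum over the $k$ coordinates. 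This buys self-containment: you re-derive the nonlinearity of $h_k$ rather than cite it, and as a by-product obtain the full spectrum (flat for even $k$, three-valued $\{0,\pm 2^{(1-k)/2}\}$ for odd $k$), not just its maximum. The paper's version is shorter and places the result within the existing bent-function literature; yours is more elementary and makes transparent \emph{why} $h_k$ is bent, at the cost of a short explicit computation.
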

\begin{proof}
	The proof entirely consists of combining one simple observation with previous knowledge on the non-linearity of Boolean functions. As, by convexity, the best LHV strategy in an NMQC game with a uniform sampling distribution $\xi(x)=1/2^n$, where the goal is to compute a Boolean function $f:\{0,1\}^n\rightarrow \{0,1\}$, is to output the \emph{closest} linear function, the average classical success probability $\psc$ is given by:
\begin{equation}\label{eq:pscNL}
		\psc=\left[2^n-\min_{g\,\text{linear}}\mathrm{dist}(f,g)\right]/2^n,
\end{equation}
where the distance $\mathrm{dist}(f,g)$ of two Boolean functions $f$ and $g$ is their Hamming distance, i.e. the number of arguments $x$ for which $f(x)\neq g(x)$. \newline

As already pointed out in~\cite{Raussendorf_2013}, the minimum distance of a function $f$ to the closest linear function $g$ is its \emph{nonlinearity} $\NL$. Boolean functions with the maximum nonlinearity possible are called \emph{bent} functions  ~\cite{Savicky1994,Maitra2002}. They lead to a minimal classical success probability $\psc$ and are thus best suited to demonstrate non-locality with this type of NMQC game (see also~\cite{Raussendorf_2013}). \newline
The function $h_k(x)$, as defined by Eqn.~\eqref{eq:hk} is in fact one of the few symmetric bent functions~\cite{Savicky1994,Maitra2002}. For even $n$, its non-linearity is $\NL =2^{n-1}-2^{n/2-1}$\cite{Savicky1994}. For odd $n$, its non-linearity is $\NL=2^{n-1}-2^{(n-1)/2}$. Inserting these values into Eqn.~\eqref{eq:pscNL} and using Eqn.~\eqref{eq:psu}, one then immediately obtains the classical bounds of the associated Bell inequalities.

\end{proof}

Note that, as the quantum bound of the Bell inequality $q$ is always 1, this NMQC game is related to a Bell inequality, for which the ratio $q/c$ between the quantum bound and its LHV counterpart increases exponentially.

\section{Individual measured bounds}
\label{sec:IndividualMeasuredBounds}

To understand why the averaged results are notably smaller than the maximal ones produced by a single configuration we will take a look at all individual measured bounds, i.e. all specific qubit configurations, of two Bell inequalities. Fig.~\ref{fig:BetasindividualOR3Plus} shows the (mitigated) bounds of the Bell inequality induced by $\ORtp(x)$ (see Table~\ref{tab:Inequalities}) for every single four-qubit configuration. One can see that the measured bounds strongly vary, even including negative values. This coincides with the measured expectation values for a qubit configuration, which produced high values (0-1-2-3), and a qubit configuration, which produced negative values (10-18-12-15) (see Fig.~\ref{fig:expvaluesOR3p}).

In general, the performance of single qubits varies over time. The experiment for $\ORtp(x)$ was run on the 7th June 2022, whereas the one for $h_3(x)$ was run on 21st May 2022. At the time of the four outlier NMQC runs for $\ORtp(x)$, the readout-error rates of the qubits 18 and 12 were 0.011 and 0.022 compared to 0.017 and 0.008 for the same qubits during the measurements belonging to the probabilistic NMQC game induced by $h_3(x)$. We plot the violations of the associated Bell inequality for $h_3(x)$ in Fig.~\ref{fig:BetasindividualH3}, where no qubit configuration exhibits this kind of behaviour. Therefore, to show the violation of Bell inequality for a single qubit configuration one has to perform multiple runs at different times and average the results.

\begin{figure*}[t]
	\centering
	\includegraphics[width=1.0 \linewidth]{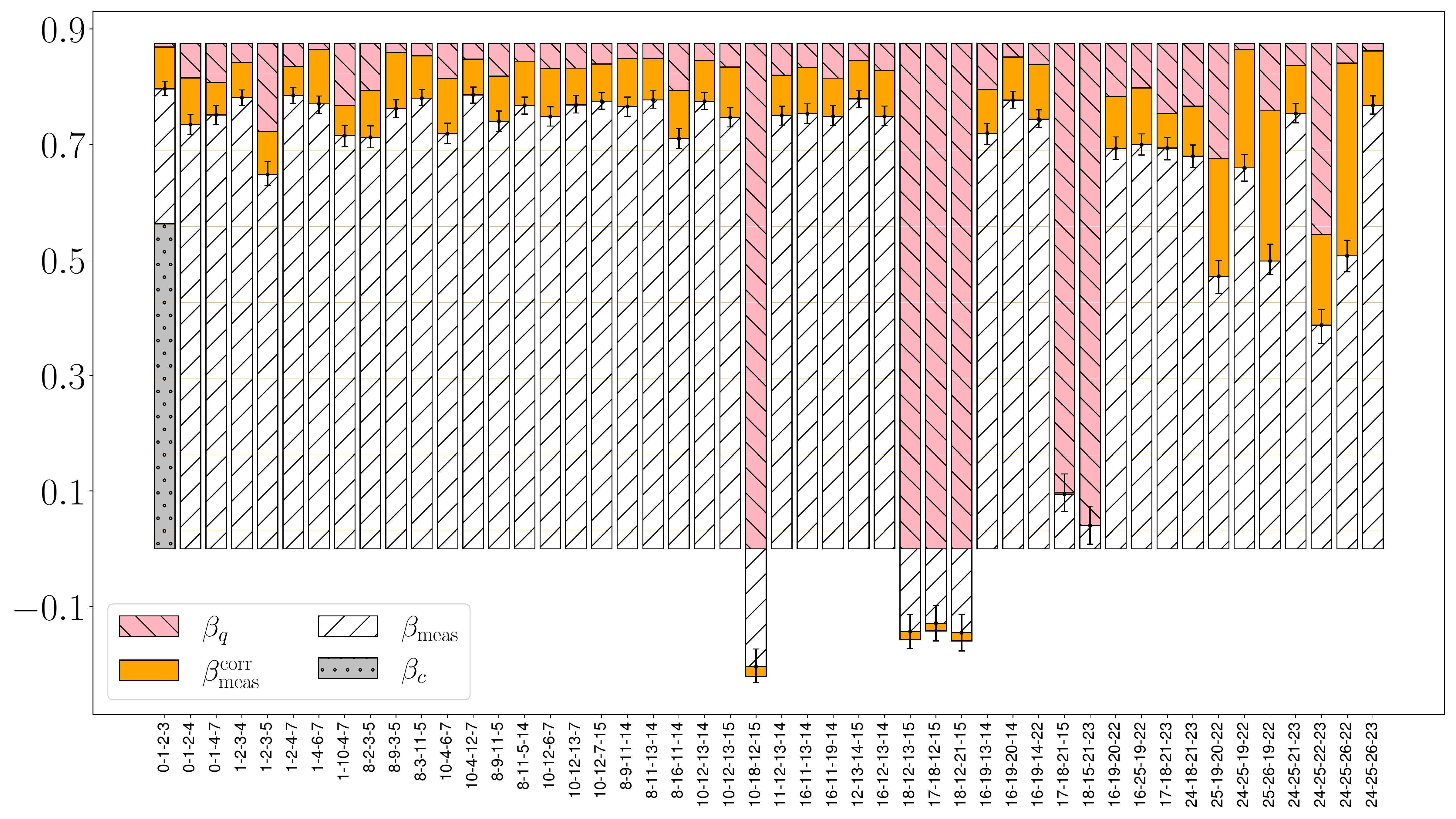}
	\caption{(Mitigated) measured bounds of the Bell inequality for every qubit configuration induced by the functions $\ORtp(x)$.}
	\label{fig:BetasindividualOR3Plus}
\end{figure*}

\begin{figure*}[t]
	\centering
	\includegraphics[width=1.0 \linewidth]{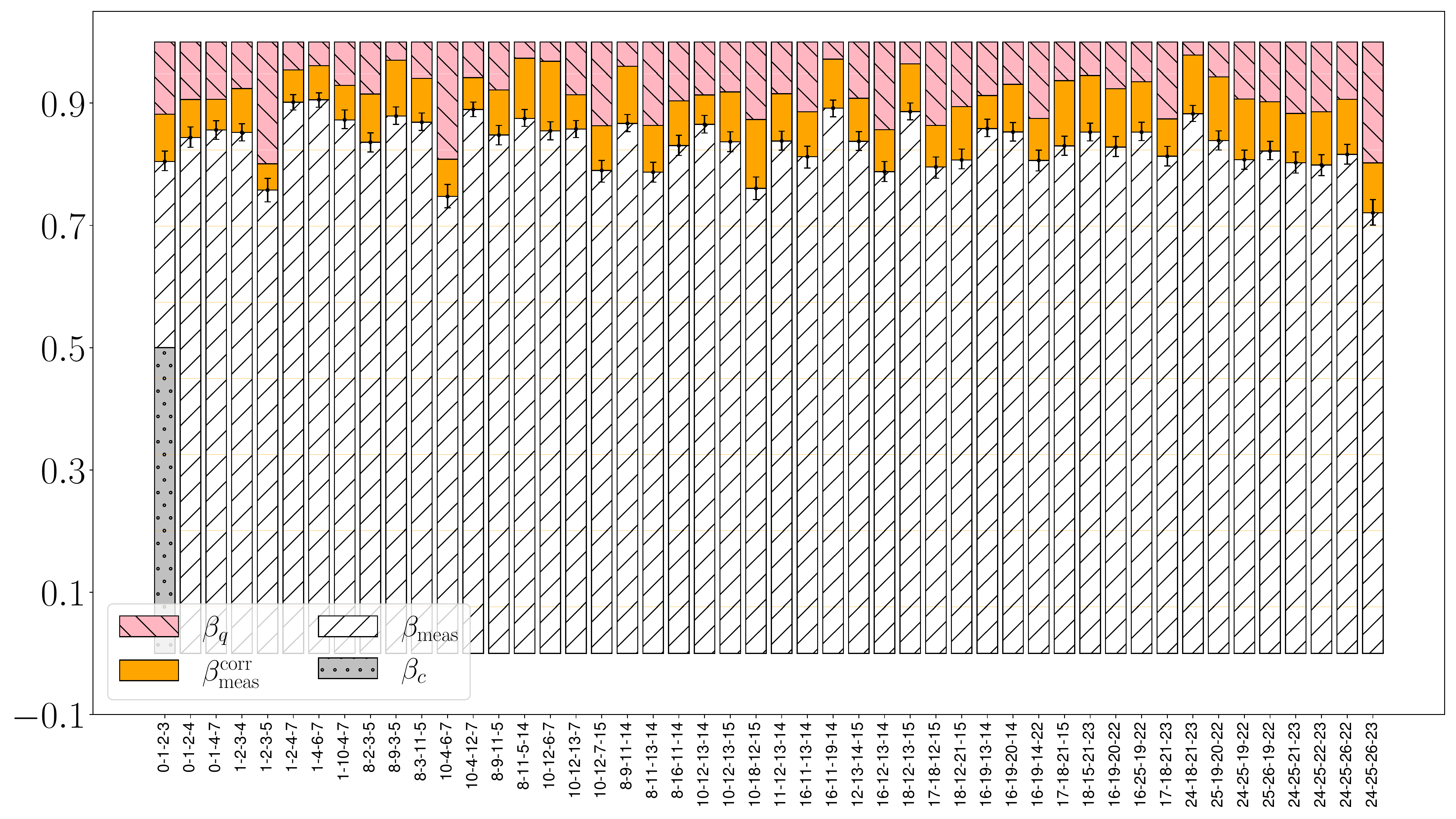}
	\caption{(Mitigated) measured bounds of the Bell inequality for every qubit configuration induced by the functions $h_{3}(x)$.}
	\label{fig:BetasindividualH3}
\end{figure*}

 \begin{figure*}[t]
	\centering
	\includegraphics[width=1.0 \linewidth]{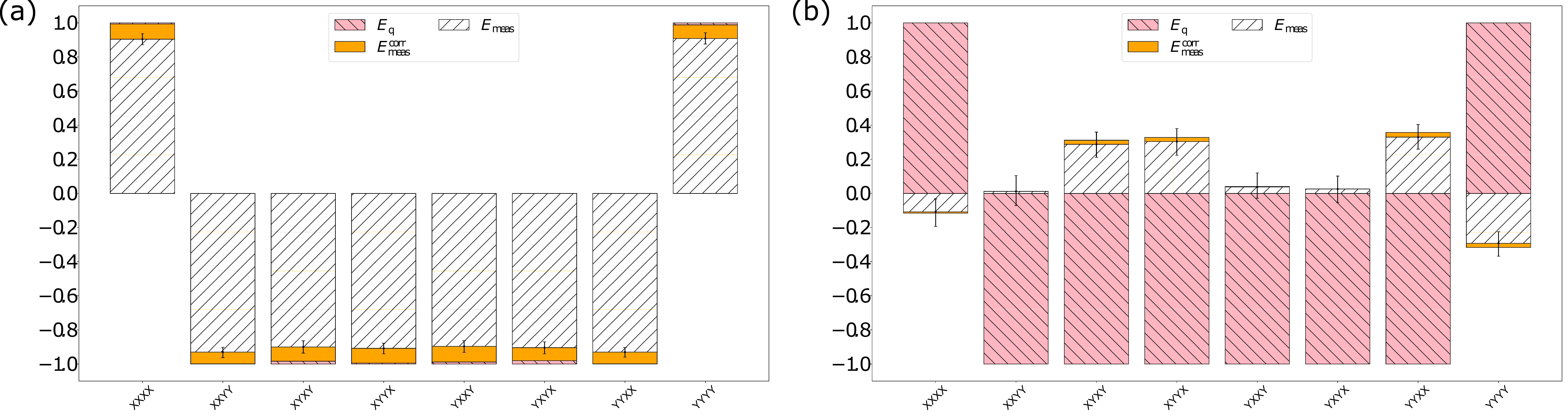}
	\caption{Individual expectation values of the operators making up the Bell operator induced by the probabilistic NMQC game for the function $\mathrm{OR}_3^{\oplus}$ for two different qubit configurations. In theory, they should all be $\pm 1$. (a) Qubit configuration 0-1-2-3. (b) Qubit configuration 10-18-12-15.}
	\label{fig:expvaluesOR3p}
\end{figure*}

\newpage
\end{widetext}

\end{document}